\newtheorem{theorem}{Theorem}
\newtheorem{lemma}[theorem]{Lemma}
\theoremstyle{definition}
\newtheorem{definition}[theorem]{Definition}
\theoremstyle{remark}
\newtheorem*{remark}{Remark}
\newcommand{\Dhat}{\widehat{D}}
\title{Connectivity Oracles for Predictable Vertex Failures\thanks{Part of this work was done when Bingbing Hu and Adam Polak were at Max Planck Institute of Informatics, and Evangelos Kosinas was at University of Ioannina.}}
\author{Bingbing Hu \\ {\small UC San Diego} \and Evangelos Kosinas \\ {\small ISTA} \and Adam Polak \\ {\small Bocconi University}}
\date{}
\begin{document}

\maketitle

\begin{abstract}
The problem of designing connectivity oracles supporting vertex failures is one of the basic data structures problems for undirected graphs. It is already well understood: previous works [Duan--Pettie STOC'10; Long--Saranurak FOCS'22] achieve query time linear in the number of failed vertices, and it is conditionally optimal as long as we require preprocessing time polynomial in the size of the graph and update time polynomial in the number of failed vertices.

We revisit this problem in the paradigm of algorithms with predictions: we ask if the query time can be improved if the set of failed vertices can be predicted beforehand up to a small number of errors. More specifically, we design a data structure that, given a graph $G=(V,E)$ and a set of vertices predicted to fail $\widehat{D} \subseteq V$ of size $d=|\widehat{D}|$, preprocesses it in time $\tilde{O}(d|E|)$ and then can receive an update given as the symmetric difference between the predicted and the actual set of failed vertices $\widehat{D} \triangle D = (\widehat{D} \setminus D) \cup (D \setminus \widehat{D})$ of size $\eta = |\widehat{D} \triangle D|$, process it in time $\tilde{O}(\eta^4)$, and after that answer connectivity queries in $G \setminus D$ in time $O(\eta)$. Viewed from another perspective, our data structure provides an improvement over the state of the art for the \emph{fully dynamic subgraph connectivity problem} in the \emph{sensitivity setting} [Henzinger--Neumann ESA'16]. 

We argue that the preprocessing time and query time of our data structure are conditionally optimal under standard fine-grained complexity assumptions.
\end{abstract}

\section{Introduction}

Connectivity is a fundamental problem in undirected graphs. For a static input graph, one can compute its connected components in linear time; after such preprocessing it is easy to answer, in constant time, queries asking whether two given vertices $u$ and $v$ are connected.

The problem becomes much more challenging when the graph may change over time -- for instance, when certain vertices may fail and have to be removed from the graph. Duan and Pettie~\cite{DuanP10} were the first to propose a data structure that, after constructing it for graph $G=(V,E)$, can receive an \emph{update} consisting of a set of \emph{failed} vertices $D \subseteq V$, and then can answer connectivity \emph{queries} in $G \setminus D$ (i.e., in the subgraph of $G$ induced on $V \setminus D$). For a graph with $n$ vertices and $m$ edges, the data structure can be constructed in $\tilde{O}(nm)$ time\footnote{We use the $\tilde{O}$ notation to hide polylogarithmic $(\log n)^{O(1)}$ factors.}, the update takes time $\tilde{O}(d^6)$, where $d=|D|$, and each query runs in $O(d)$ time.\footnote{Their data structure allows also for a certain trade-off between the preprocessing and update times.} Under the Online Matrix-Vector Multiplication (OMv) Hypothesis, this linear query time is optimal (up to subpolynomial factors) if we require that preprocessing runs in time polynomial in the graph size and the update time is polynomial in the number of failed vertices~\cite{HenzingerKNS15}.

We study the problem of designing such connectivity oracles through the new lens of \emph{algorithms with predictions}. More specifically, we ask whether the query time can be improved, beyond the fine-grained lower bound, if some quite-accurate-but-not-perfect prediction of the set of failed vertices is available already during the preprocessing.

\subsection{Our results}

We define the problem of designing a connectivity oracle for predictable vertex failures as follows. We ask for a data structure that:
\begin{itemize}
    \item in the \emph{preprocessing phase}, receives an undirected graph $G=(V, E)$, an upper bound on the number of failed vertices $d \in \mathbb{Z}_+$, and a set $\Dhat \subseteq V$ of vertices predicted to fail ($|\Dhat| \leqslant d$);
    \item in the \emph{update phase}, receives a set $D \subset V$ of vertices that actually fail ($|D| \leqslant d$);
    \item in the \emph{query phase}, receives two vertices $u, v \in V \setminus D$, and answers whether $u$ and $v$ are connected in $G \setminus D \coloneqq G[V \setminus D]$.
\end{itemize}
We measure the prediction error as the size of the difference between the predicted and the actual set of failed vertices $\eta \coloneqq |\Dhat \triangle D| = |\Dhat \setminus D| + |D \setminus \Dhat|$. Our goal is to design a data structure that improves upon the running time of classic (prediction-less) data structures when $\eta \ll d$. To allow for update time sublinear in $d$, we let the set $D$ be given not explicitly but as the symmetric difference $\Dhat \triangle D \coloneqq (\Dhat \setminus D) \cup (D \setminus \Dhat)$ instead.

Our main result is such data structure with the following guarantees:
\begin{theorem}
\label{thm:algorithm}
There is a (deterministic) connectivity oracle for predictable vertex failures with $\tilde{O}(dm)$ preprocessing time and space, $\tilde{O}(\eta^4)$ update time, and $O(\eta)$ query time.
\end{theorem}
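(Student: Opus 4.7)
The plan is to exploit the prediction $\Dhat$ by performing, during preprocessing, all of the work needed to understand $G \setminus \Dhat$; at update time one only has to locally repair the connectivity structure around the $\eta$ vertices where $D$ and $\Dhat$ disagree. Writing $A = \Dhat \setminus D$ and $B = D \setminus \Dhat$ so that $\eta = |A| + |B|$, the target graph $G \setminus D$ is obtained from $G \setminus \Dhat$ by \emph{inserting} the vertices of $A$ (which merges some components) and then \emph{deleting} the vertices of $B$ (which may split some components). I would construct at update time an auxiliary graph $\tilde G$ of size $\mathrm{poly}(\eta)$ whose connectivity structure mirrors that of $G \setminus D$ on an appropriately chosen interface, and answer each query by mapping its endpoints into $\tilde G$.

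During preprocessing I would compute the connected components $\mathcal C = \{C_1,\dots,C_k\}$ of $G \setminus \Dhat$, labelling every vertex in $V \setminus \Dhat$ with the index of the component containing it. On each induced subgraph $G[C_i]$ install a classical connectivity-under-vertex-failures oracle (Duan--Pettie or Long--Saranurak); since $\sum_i |E(C_i)| \le m$ and $\sum_i |C_i| \le n \le d + n$, the total cost of all these per-component oracles fits within the $\tilde O(dm)$ preprocessing budget. In addition, for each $u \in \Dhat$ I would record the set of components of $\mathcal C$ that $u$ is adjacent to, together with one witness edge into each such component.

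At update time, partition $\Dhat \triangle D$ into $A$ and $B$. For every component $C_i$ with $B \cap C_i \ne \emptyset$ (at most $|B|$ of them), query the per-component oracle to obtain the at most $|B \cap C_i|+1$ refined sub-components of $C_i \setminus B$; altogether this yields only $O(\eta)$ refined supernodes. Then build $\tilde G$ whose nodes are (i) the vertices of $A$, (ii) the refined supernodes of the affected components, and (iii) one representative supernode per unaffected component of $\mathcal C$ that is incident to some vertex of $A$. The edges of $\tilde G$ are those of $G$ between such nodes, recovered from the adjacency lists computed in preprocessing; a projection step, in the spirit of the Duan--Pettie ``failure projection'' but now applied to $A$ and the $O(\eta)$ refined supernodes, keeps both the vertex count and the edge count at $\mathrm{poly}(\eta)$. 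Running a standard connectivity structure (DSU, or a recursively invoked small oracle) on $\tilde G$ completes the update within the claimed $\tilde O(\eta^4)$ budget. For a query $(u,v)$, one locates each endpoint's supernode—reading its precomputed label if its component of $G \setminus \Dhat$ was untouched by $B$, or otherwise invoking the relevant per-component oracle in $O(\eta)$ time—and then compares the two representatives in $\tilde G$.

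The main obstacle I expect is the construction and correctness of $\tilde G$ within the $\tilde O(\eta^4)$ update budget, despite the fact that a single vertex $u \in A$ can a priori be adjacent to arbitrarily many components of $\mathcal C$. The technical heart of the proof is a projection argument showing that all but $\mathrm{poly}(\eta)$ of these adjacencies are irrelevant for witnessing connectivity between the chosen interface vertices, so that $\tilde G$ can be both built in $\mathrm{poly}(\eta)$ time and queried in $O(\eta)$ time while still preserving connectivity in $G \setminus D$. Making this projection work uniformly—across re-additions via $A$ and splits induced by $B$, while staying consistent with the per-component oracles—is where I would concentrate the careful analysis.
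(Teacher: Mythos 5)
There is a genuine gap, and it sits exactly where you suspect it does. First, a concrete error: you claim that deleting $B \cap C_i$ from a connected component $C_i$ yields ``at most $|B \cap C_i|+1$ refined sub-components,'' so that $\tilde G$ has only $O(\eta)$ supernodes. This is false -- removing a single high-degree vertex can shatter a component into $\Theta(n)$ pieces (consider a star). Moreover, the classical per-component oracles you install answer pairwise connectivity queries; they do not enumerate the resulting components, and no such enumeration of size $\mathrm{poly}(\eta)$ exists. The paper confronts precisely this issue by working with a DFS tree of $G \setminus \Dhat$ and splitting the components of $T \setminus (D \setminus \Dhat)$ into \emph{internal components}, of which there are only $O(\eta)$ and which become the nodes of the auxiliary graph, and \emph{hanging subtrees}, of which there may be $\Theta(n)$ and which are never enumerated: they appear only implicitly, as contiguous DFS segments over which range-emptiness queries are issued.

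Second, the ``projection argument'' you defer is not a detail to be filled in later -- it is the entire technical content of the theorem, and your preprocessing does not store enough to carry it out. To decide in $\mathrm{poly}(\eta)$ time which nodes of $\tilde G$ a vertex $u \in A$ attaches to, and which pairs of nodes are joined \emph{through} one of the $\Theta(n)$ unaffected pieces (two vertices of $A$ sharing a hanging subtree, or a vertex of $A$ and an internal component both touching the same hanging subtree), the paper needs structures built specifically around each predicted-failed vertex: for every $u \in \Dhat$, a reordered DFS numbering $T_u$ in which the subtrees containing neighbors of $u$ are grouped into contiguous segments, a range-emptiness structure over $T_u$, and the neighbors of every other $u' \in \Dhat$ sorted by $T_u$-order. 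These let all the relevant ``is there an edge into this union of hanging subtrees'' tests be answered as $O(\eta^3)$ batched 1D/2D range queries. Your preprocessing (component labels, per-component oracles, one witness edge per adjacent component) cannot support any of this, because a witness edge per \emph{component of $G \setminus \Dhat$} says nothing about which \emph{post-deletion fragments} of that component $u$ still reaches. So while your high-level architecture (local repair around the $\eta$ disagreements, an interface graph of size $\mathrm{poly}(\eta)$) matches the paper's in spirit, the proposal as written would not yield the claimed bounds.
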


Our result can also be interpreted outside of the realm of algorithms with predictions. One can think of $\Dhat$ as of the set of vertices that are initially \emph{inactive}. Then, in the update phase, a small number $\eta$ of vertices change their state -- some inactive vertices become active and some active come to be inactive. With this perspective, our result can be interpreted as extending the type of the data structure of Duan and Pettie~\cite{DuanP10} with vertex \emph{insertions}. Since we want to keep the update and query times depending only on the number of affected vertices we cannot even afford to read all the edges of an inserted vertex, which can potentially have a high degree; hence, we need to know candidates for insertion already in the preprocessing phase, and the initially inactive vertices are precisely these candidates.

The above perspective -- of adding insertions to a Duan--Pettie-style oracle -- actually gives a good intuition of how our data structure works internally: we create a (prediction-less) vertex-failure connectivity oracle for $G \setminus \Dhat$, we use it to further remove $D \setminus \Dhat$, and we extend it so that it is capable of reinserting $\Dhat \setminus D$. We found that the original connectivity oracle of Duan and Pettie~\cite{DuanP10}, with its complex \emph{high-degree hierarchy tree}, is not best suited for such an extension. Instead, we base our data structure on a recent DFS-tree decomposition scheme proposed by Kosinas~\cite{Kosinas23}.

\paragraph{Comparison to prior work.}
Henzinger and Neumann~\cite{HenzingerN16} studied already the problem of designing a data structure like ours, under the name of \emph{fully dynamic subgraph connectivity} in the \emph{sensitivity setting}. Their proposed solution is a reduction to a Duan--Pettie-style (i.e., any deletion-only) connectivity oracle. Plugging in the current best oracle of Long and Saranurak~\cite{LongS22}, the reduction gives $\hat{O}(d^3m)$ preprocessing time\footnote{The $\hat{O}$ notation hides subpolynomial $n^{o(1)}$ factors.}, $\hat{O}(\eta^4)$ update time, and $O(\eta^2)$ query time. Because the oracle of~\cite{LongS22} is (near-)optimal under plausible fine-grained complexity assumptions, these running times turn out to be (nearly) the best we can hope to get from that reduction. 

Using algebraic techniques, van den Brand and Saranurak~\cite{BrandS19} tackle a more general problem of answering \emph{reachability} queries in \emph{directed} graphs in the presence of edge insertions and deletions. Note that in directed graphs edge updates can be used to simulate both vertex failures and activating (initially inactive) vertices. This is done by splitting each vertex into two -- one only for incoming edges, one only for outgoing -- and having a special edge from the former to the later whenever the original vertex is meant to be active. Therefore, the data structure that they provide also solves the problem that we study; it achieves $\hat{O}(n^\omega)$ preprocessing time, $\hat{O}(\eta^\omega)$ update time, and $O(\eta^2)$ query time, with high probability, where $\omega \leqslant 2.372$ denotes the matrix multiplication constant.

Our result offers a considerable improvement over the preprocessing and query times of both~\cite{HenzingerN16} and~\cite{BrandS19}.

\paragraph{Lower bounds.}
We argue that the preprocessing and update times of our data structure are optimal under standard fine-grained complexity assumptions. For the query time, notice that a classic (prediction-less) vertex-failure connectivity oracle can be simulated by setting $\Dhat = \emptyset$ in the preprocessing; then $\eta = d$, and hence the lower bound of Henzinger et al.~\cite{HenzingerKNS15} implies that the query time $O(\eta^{1-
\varepsilon})$ is impossible, for any $\varepsilon > 0$ (assuming the OMv Hypothesis and requiring that the preprocessing time is polynomial in the graph size and the update time is polynomial in $d$).

Regarding the preprocessing time, in Section~\ref{sec:lower_bound} we give a lower bound based on the Exact Triangle Hypothesis\footnote{The Exact Triangle Hypothesis says that there is $O(n^{3-\varepsilon})$ time algorithm for finding in an $n$-node edge-weighted graph a triangle whose edge weights sum up to $0$, for any $\varepsilon > 0$.}, which is implied by both the 3SUM Hypothesis and the APSP Hypothesis, so it is a weaker assumption than either of those popular fine-grained complexity hypotheses. Our lower bound shows that the $\tilde{O}(dm)$ complexity is conditionally optimal, up to subpolynomial factors, as long as the update and query times depend polynomially only on $\eta$, and not on the graph size nor on $d$.

\begin{theorem}
\label{thm:lowerbound}
Unless the Exact Triangle Hypothesis fails, there is no connectivity oracle for predictable vertex failures with preprocessing time $O(d^{1-\varepsilon}m)$ or $O(dm^{1-\varepsilon})$ and update and query times of the form $f(\eta) \cdot n^{o(1)}$, for any $\varepsilon > 0$.
\end{theorem}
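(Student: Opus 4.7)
My plan is to reduce from 3SUM via the standard hashing-and-batching framework of Pătraşcu, refined by Kopelowitz--Pettie--Porat. Given an instance of 3SUM on $N$ integers, I first apply a nearly-linear hash into $R$ buckets of expected size $N/R$, so that every 3SUM witness must lie in one of $\Theta(R^2)$ \emph{compatible} bucket triples whose indices sum to zero. The problem reduces to checking each compatible triple for a local witness.

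Next, I would construct a single graph $G$ on $n = \Theta(N)$ vertices with $m = \Theta(N)$ edges together with a predicted failure set $\Dhat$ of size $d = \Theta(N)$, designed so that each bucket-triple test is realized by one update of symmetric-difference size $\eta = O(N/R)$ followed by a single connectivity query between two designated probe vertices $s,t$. Each input element is represented by a constant-size gadget containing a ``toggle'' vertex placed in $\Dhat$ by default; the gadgets are glued through hub vertices indexed by the additive structure of the hash, so that reactivating the toggles of three target buckets exposes constant-length $s$-$t$ paths corresponding exactly to the potential witnesses of that triple.

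Plugging the hypothesized oracle into this reduction solves 3SUM in time
\[
T_{\text{prep}}(d,m) \;+\; O(R^2) \cdot f(O(N/R)) \cdot n^{o(1)}.
\]
With $d=m=\Theta(N)$, both forbidden preprocessing regimes $O(d^{1-\varepsilon}m)$ and $O(dm^{1-\varepsilon})$ become $O(N^{2-\varepsilon})$, and an appropriate choice of $R$ (depending on $\varepsilon$ and on the growth of $f$, potentially via iterated hashing so that $\eta$ becomes subpolynomial) drives the query contribution below $N^{2-\delta}$ for some $\delta=\delta(\varepsilon)>0$, contradicting the 3SUM Hypothesis. A small asymmetric tweak, taking $d$ slightly larger than $m$ or vice versa, lets the same construction cover both preprocessing bounds.

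The main obstacle is the gadget design: all $\Theta(R^2)$ bucket-triple tests must be encoded in a single linear-size graph so that toggling $O(N/R)$ vertices picks out exactly one triple without spurious $s$-$t$ paths through inactive buckets. A natural construction exploits the additive structure of the hash: arrange the $R$ bucket indices of $A$, $B$, $C$ on three cyclic layers joined through sum-indexed hub vertices, and place the toggle switches along the witness paths; correctness then relies on the linear hash to ensure that only paths through the activated triple can close. Verifying the absence of false positives while keeping $|E(G)|$ linear, and choosing $R$ uniformly in $f$ via the Kopelowitz--Pettie--Porat amortization, will be the most delicate parts of the proof.
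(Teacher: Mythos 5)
There is a genuine gap, and it lies in the parameter setting rather than in the gadgetry. The theorem must hold for an \emph{arbitrary} function $f$: the oracle you are trying to rule out may have update time $f(\eta)\cdot n^{o(1)}$ with, say, $f(\eta)=2^{2^{\eta}}$. In your reduction each bucket triple costs one update with $\eta=\Theta(N/R)$, so the query-side cost is $\Theta(R^2)\cdot f(\Theta(N/R))\cdot n^{o(1)}$, and the number of updates alone is $\Theta(R^2)$. To contradict the 3SUM Hypothesis you need this total to be $O(N^{2-\delta})$, which forces $R\le N^{1-\delta/2}$ and hence $\eta\ge N^{\delta/2}$; but then $f(\eta)$ is an unbounded polynomial (or worse) in $N$ --- already $f(\eta)=\eta^{c}$ for large constant $c$ makes each single update cost more than $N^{2}$. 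Conversely, your fallback of iterated hashing to make $\eta$ subpolynomial forces $R=N^{1-o(1)}$ and thus $R^{2}=N^{2-o(1)}$ updates, so the total time is $N^{2-o(1)}$ even at unit cost per update, and no contradiction with the (merely $N^{2-o(1)}$) 3SUM lower bound is obtained. The one-triple-per-update scheme therefore fails for every choice of $R$: to get a lower bound against arbitrary $f$, each update must have $\eta=O(1)$ while each query still tests a superconstant number of candidate witnesses at once.

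That batching is exactly what the paper's route achieves. It reduces from \emph{offline SetDisjointness} in the hard regime of Kopelowitz--Pettie--Porat ($|U|=\Theta(n^{2-2\gamma})$, $|F|=\Theta(n\log n)$ sets of size $O(n^{1-\gamma})$, $q=\Theta(n^{1+\gamma}\log n)$ queries, requiring $n^{2-o(1)}$ time under 3SUM). The graph is tripartite, with $A$ and $C$ copies of the set family and $B$ the universe, $\Dhat=A\cup C$, and the query for a pair $(S_i,S_j)$ sets $D=\Dhat\setminus\{a_i,c_j\}$, so $\eta=2$ always and $f(\eta)\cdot n^{o(1)}=n^{o(1)}$ regardless of $f$; a single connectivity query between $a_i$ and $c_j$ tests all $|S_i|\cdot|S_j|$ candidate witnesses simultaneously. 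Since $q\cdot n^{o(1)}=n^{1+\gamma+o(1)}\ll n^{2}$, the quadratic cost is forced into the preprocessing, while $d=\tilde\Theta(n)$ and $m=\tilde\Theta(n^{2-\gamma})$, so for $\gamma$ close to $1$ both $O(d^{1-\varepsilon}m)$ and $O(dm^{1-\varepsilon})$ fall below $n^{2}$. Note also that the essential asymmetry is $d\ll m$, not a slight tweak between $d$ and $m$: with your $d=m=\Theta(N)$ the honest preprocessing bound $\tilde O(dm)=\tilde O(N^{2})$ already equals the full 3SUM budget, so nothing in the argument would locate the hardness in the preprocessing rather than in the quadratically many updates.
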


We note that Long and Saranurak~\cite{LongS22} give a similar lower bound for connectivity oracles for vertex failures (without predictions), which is however based on the Boolean Matrix Multiplication Hypothesis, and hence it holds only against ``combinatorial'' algorithms.

\subsection{Related work}

\paragraph{Vertex-failure connectivity oracles.}

Duan and Pettie~\cite{DuanP10} were the first to study connectivity oracles supporting vertex failures, and they already achieved query time that turns out to be (conditionally) optimal~\cite{HenzingerKNS15}. Their preprocessing time and update time were subsequently improved~\cite{DuanP17,DuanP20,LongS22} to $\tilde{O}(dm)$ and $\tilde{O}(d^2)$, respectively, which are optimal under standard fine-grained complexity assumptions~\cite{LongS22}. Recently, Kosinas~\cite{Kosinas23} gave a data structure with a slightly worse $\tilde{O}(d^4)$ update time, but much simpler than all the previous constructions.

On the other hand, Pilipczuk et al.~\cite{PilipczukSSTV22} proposed a data structure that handles update and query together, in time doubly exponential in $d$ but with no dependence (even logarithmic) on the graph size. Their approach is thus beneficial, for instance, when $d$ is a constant.

\paragraph{Edge-failure connectivity oracles.}
Edge failures are much easier to handle than vertex failures. In particular, there exist edge-failure connectivity oracles using $\tilde{O}(m)$ space with $\tilde{O}(d)$ update time and $\tilde{O}(1)$ query time~\cite{PatrascuT07, KapronKM13, DuanP20}, which is \emph{unconditionally} optimal (up to polylogarithmic factors). This difference can be explained by the fact that a single edge deletion can only split the graph into two parts while deleting a high-degree vertex can create multiple new connected components.

\paragraph{Fully dynamic subgraph connectivity.}
In the fully dynamic subgraph connectivity problem every update consists in the activation or deactivation of a vertex, and there is no bound on the number of vertices that may be active or inactive at any given point. In the meantime, there are queries that ask whether two active vertices are connected through a path that visits only active vertices. This model was introduced by Frigioni and Italiano~\cite{FrigioniI00} in the context of planar graphs, where they show that one can achieve polylogarithmic time per operation. For general undirected graphs, there are several solutions, deterministic and randomized, that provide a trade-off between the update and query time \cite{ChanPR11,Duan10,DuanZ17,BaswanaCC019}. In all those cases, both the update and the query time are $\Omega(m^\delta)$, for some $\delta>0$, which is in accordance with known conditional lower bounds \cite{AbboudW14,HenzingerKNS15,JinX22}. Thus, there is a clear distinction between the complexity in this unrestricted model, against that in the fault-tolerant (sensitivity) setting. In other words, we can achieve better performance if there is a bound on the number of vertices at any given time whose state may be different from their original one.

\paragraph{Fault-tolerant labelling schemes.}
There is a recent line of work that provides efficient labelling schemes for connectivity queries in the fault-tolerant setting \cite{DoryP21,ParterP22a,IzumiEWM23,ParterPP24}. The problem here is to attach labels to the vertices, so that one can answer connectivity queries in the presence of failures given the labels of the query vertices and those of the failed vertices. The most significant complexity measures are the size of the labels (ideally, the number of bits per label should be polylogarithmic on the number of vertices), and the time to retrieve the answer from the labels. There are many problems that admit efficient labelling schemes (see, e.g., \cite{Peleg05}). Such labellings have applications in the distributed setting, because they allow to compute the answer without having to access a centralized data structure. All else being equal, it is obviously a much harder problem to provide labels whose total size matches that of the best centralized data structures.

\paragraph{Algorithms with predictions.}

The field of algorithms with predictions, also known as learning-augmented algorithms, blossoms in the recent years, see surveys~\cite{MitzenmacherV20,MitzenmacherV22} and an updated list of papers~\cite{LindermayrM22}. While the majority of those works concern online algorithms, data structures are also being studied in this paradigm, see, e.g.,~\cite{KraskaBCDP18,FerraginaV19,LinLW22}. These are however mostly index data structures, BSTs, etc., and not graph data structures.

Probably the closest to our work are three recent papers about dynamic graph algorithms with predictions~\cite{LiuS24,BFNP24,HSSY24}.  We note that \cite{BFNP24} also manage to use predictions to go beyond OMv-based conditional lower bounds, but they achieve it using algebraic algorithms for fast matrix multiplication, while we only use simple combinatorial methods and benefit from the fact that the predicted information is available already during the preprocessing time.

\paragraph{Recent follow-up work.}
Following the initial submission of our manuscript and posting a preprint on arXiv, Long and Wang~\cite{LongW24} improved over our update time, from $\tilde{O}(\eta^4)$ down to $\tilde{O}(\eta^2)$, which is conditionally optimal. Their data structure builds on the work of Long and Saranurak~\cite{LongS22}, and is arguably much more complex than ours. In particular, our bounds involve less logarithmic factors hidden in the $\tilde{O}$ notation.

\section{Preliminaries}

Let $G$ denote the input graph with $n$ vertices and $m$ edges. Let $\Dhat$ be the set of vertices predicted to fail, and let $D$ be the set of vertices that actually fail. Moreover, we let the size of both $D$ and $\Dhat$ to be at most $d$, and we let the prediction error $\eta \coloneqq |D\setminus \Dhat| + |\Dhat \setminus D|$. 

Without loss of generality, we may assume that $G\setminus\Dhat$ is connected. Indeed, we may add a new auxiliary vertex $z$ to $G$, that is connected with an edge with every vertex of the graph. Then, $G\setminus\Dhat$ is definitely connected, and in order to perform the updates, we always include $z$ in the set $D$ of failed vertices. Notice that this does not affect the asymptotic complexity of any of our measures of efficiency.

We build on the DFS-based tree decomposition scheme from \cite{Kosinas23}. Given the inputs $G$ and $\Dhat$, we grow an arbitrary DFS tree $T$ of $G \setminus \Dhat$ rooted at some vertex $r$. Let $T(v)$ denote the subtree of $T$ rooted at $v$. Let $\mathit{ND}(v)$ denote the number of descendants of $v$, which is equal to the size of $T(v)$. Let $p_T(v)$ denote the parent of $v$ in $T$.

Suppose we further remove some set of failed vertices $D \setminus \Dhat$ from $T$. Then $T$ gets decomposed into multiple connected components, which can be however connected to each other through \emph{back-edges} in $(G \setminus \Dhat) \setminus (D \setminus \Dhat)$. We use $r_C$ to denote the root of each connected component $C$. The presence of back-edges is important in that they establish connectivity between the otherwise loose connected components of $T \setminus (D \setminus \Dhat)$, so connectivity queries between two vertices can be reduced to queries between the connected component(s) that they reside in. Let $low(v)$ denote the lowest proper ancestor of $v$ that is connected to $T(v)$ through a back-edge. It is useful to extend the definition of $low(v)$ to $low_{k}(v)$, as is introduced in \cite[Section 2.1]{Kosinas23}. We identify $low_1(v)$ with $low(v)$. Then for every $k > 1$, we define $low_k(v)$ to be the next lowest ancestor of $v$ connected to $T(v)$ through a back-edge that is higher than $low_{k-1}(v)$. 

It is important to distinguish between two types of connected components: \emph{hanging subtree} and \emph{internal component}, which are introduced in \cite{Kosinas23}. A connected component $C$ of $T \setminus (D \setminus \Dhat)$ is called a \emph{hanging subtree} if none of the failed vertices from $D \setminus \Dhat$ is a descendant of it. Otherwise, $C$ is called an \emph{internal component}. (See \cite[Section 3.3]{Kosinas23} for more properties of these two types of components.) For a visual illustration of the decomposition, see the left hand side of Figure~\ref{fig:decomposition}. After the black vertices are removed, the white arrows represent internal components, and the grey triangles represent hanging subtrees. The reason for this distinction is that while the number of hanging subtrees can be close to $n$, the number of internal components is bounded by the number of failed vertices, which is a convenient fact that we leverage in the update phase. If $C$ is an internal component and $f \in D \setminus \Dhat$ is a failed vertex whose parent lies in $C$, then we call $f$ a \emph{boundary vertex of} $C$. The set of all boundary vertices of $C$ is denoted $\partial(C)$. For instance, in Figure~\ref{fig:decomposition}, $C_1$ has two boundary vertices and $C_3$ has only one boundary vertex.

We also add structure to the vertices in $D \setminus \Dhat$ by putting them in a \emph{failed vertex forest} $F$ (see \cite[Section 3.3]{Kosinas23}). Basically each failed vertex $f \in D \setminus \Dhat$ has a parent pointer $parent_F(f)$ to the largest (w.r.t.~the DFS numbering) failed vertex that is an ancestor of $f$. Each $f$ also has a pointer to the list of its children in the forest. This forest can be built in $O(\eta^2)$ time. To check whether a failed vertex $f$ is a boundary vertex of some internal component $C$, we can check whether $parent_F(f) \neq p_T(f)$~\cite[Lemma 5]{Kosinas23}.

\section{Our data structure}

\label{sec:algorithm}

Our oracle operates in three phases: preprocessing, update, and query.

\paragraph{Preprocessing.} We begin with a DFS tree $T$ for the graph $G \setminus \Dhat$, and prepare relevant data structures, which are detailed in Section~\ref{sec:preprocessing}.

\paragraph{Update.} In this phase, we are concerned with further removing the set of vertices $D \setminus \Dhat$ from $G \setminus \Dhat$, and adding back the vertices $\Dhat \setminus D$ that were deleted in the previous phase because of prediction errors. We build an auxiliary \emph{connectivity graph} $\mathcal{M}$, which captures connectivity relations among so called \emph{internal components} of $T \setminus (D \setminus \Dhat)$ and the reinserted vertices $\Dhat \setminus D$.

\paragraph{Query.} Given two vertices $s, t \in G \setminus D$, we find a representative connected component for each of $s$ and $t$, and answer the connectivity query based on $\mathcal{M}$.

\subsection{The preprocessing phase}
\label{sec:preprocessing}
In the preprocessing phase, we build a number of data structures around $G \setminus \Dhat$. It takes $\tilde{O}(dm)$ time. Our preprocessing data structures expand on the data structures introduced in \cite{Kosinas23} (the first six in the list below). See Section 3.1 in \cite{Kosinas23} for more explanation and ways to construct them.
\begin{enumerate}
    \item A DFS-tree $T$ for $G \setminus \widehat{D}$ rooted at some vertex $r$. For each vertex $v$ in $G \setminus \Dhat$, we compute its depth in $T$ and number of descendants $\mathit{ND}(v)$. We store the DFS numbering and identify the vertices with the order in which they are visited.
    \item A level-ancestor data structure on $T$~\cite{BenderF04}. For any vertex $v \in T$ and any depth $\ell \in \mathbb{Z}_+$, this data structure can return in constant time the ancestor of $v$ at depth $\ell$ in the tree.
    \item A 2D-range-emptiness data structure~\cite{ChanLP11,ChanT17} that can answer, in $\tilde{O}(1)$ time, whether there is a back-edge with one end in some segment\footnote{When we talk about \emph{segments} of a DFS tree $T$, we mean that we identify vertices w.r.t.~their DFS numbering, and a segment $[X,Y]$ of $T$ is the set of vertices $\{X,X+1,\ldots,Y\}$.} $[X_1, X_2]$ of $T$ and the other end in some segment $[Y_1, Y_2]$ of $T$.\label{item:2d}
    \item The $low_i$ points of all vertices in $T$, for $i \in \{1, \ldots, d\}$.
    \item For every $i \in \{1, \ldots, d\}$, we first sort the children of each vertex $v$ in $T$ in increasing order with respect to their $low_i$ points. Then we compute a new DFS numbering, which we denote $T_i$, corresponding to the DFS traversal in which children of every vertex are visited in that sorted order. Note that the ancestor-descendant relation in $T_i$ is the same as that in $T$.
    \item For every $T_i$, a separate 2D-range-emptiness data structure described in item~\ref{item:2d}.
    \item For every $u \in \Dhat$, we traverse the vertices in $T$ in a bottom-up fashion and mark vertices that have in their subtrees a neighbor of $u$. Then we create a tree $T_u$ as follows: for every vertex $v$ in $T$, rearrange the children of $v$ by putting unmarked vertices before marked vertices, and do a DFS traversal under this rearrangement to get $T_u$. Note that the ancestor--descendant relation in $T_u$ is the same as that in $T$.
    \item For every $u\in\Dhat$, a 2D-range-emptiness data structure on $T_u$ as described in item~\ref{item:2d}. \label{item:2dTu}
    \item For every $u, v \in \Dhat, u \neq v$, we sort the adjacent vertices of $v$ in $G \setminus \Dhat$ in increasing order with respect to their numbering in $T_{u}$ and store them as $neighbors_{u}(v)$. Moreover, in $neighbors(v)$ we store the adjacent vertices of $v$ in $G \setminus \Dhat$ in increasing order w.r.t. their original DFS numbering in $T$.   
\end{enumerate}

\begin{remark}
     The purpose of items 1--9 will become clear when we describe how to add edges to the auxiliary connectivity graph $\mathcal{M}$ in the update phase. Items 1--6 allow us to efficiently establish connectivity within $(G \setminus \Dhat) \setminus (D \setminus \Dhat)$. Items 7--9 allow us to efficiently establish connectivity between $\Dhat \setminus D$ and $(G \setminus \Dhat) \setminus (D \setminus \Dhat)$.
\end{remark}

The total time for initializing items 1--6 is $\tilde{O}(dm)$. See \cite[Section 3.1]{Kosinas23} for more explanation. Items 7, 8, and 9 take $\tilde{O}(dm)$ time because for every $u \in \Dhat$, we create a new tree by taking the old tree $T$ and rearranging its children lists in $\tilde{O}(m)$ time. Hence, the total preprocessing time is $\tilde{O}(dm)$, and the created data structures take $\tilde{O}(dm)$ space.

\subsection{The update phase}

\begin{figure}
\centering
\input{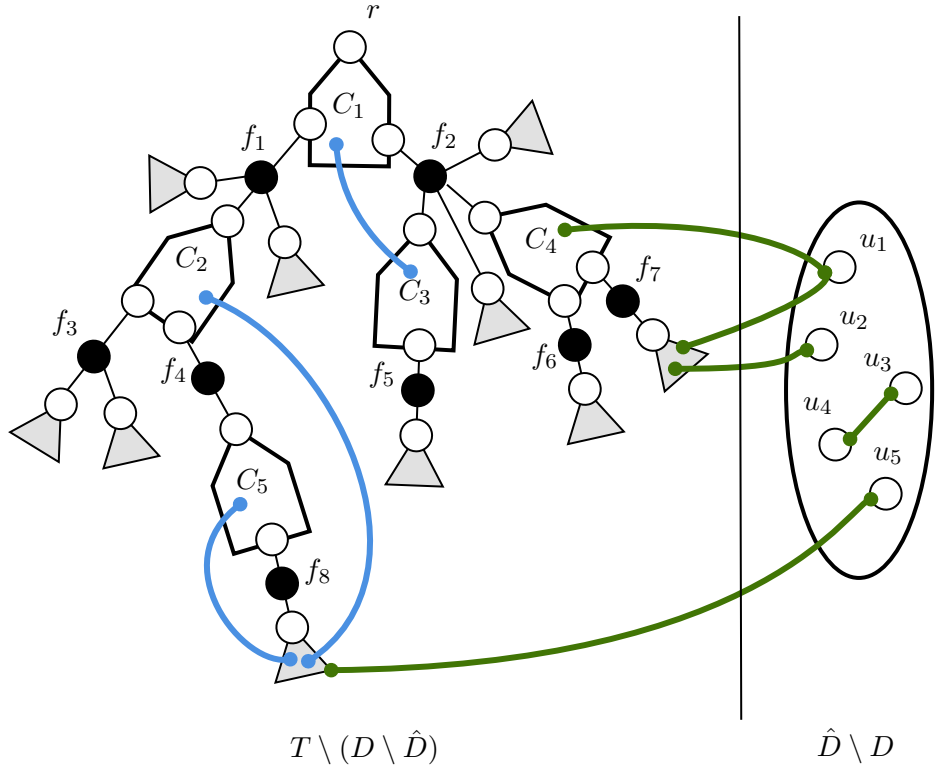}
\caption{The decomposition graph after we further remove $D \setminus \Dhat$ from $T$ and add back $\Dhat \setminus D$. On the left hand side we have $T \setminus (D \setminus \Dhat)$. The black vertices $f_1, \ldots, f_8$ are failed vertices in $D \setminus \Dhat$. The white pentagonal arrows represent internal components. The light grey triangles represent hanging subtrees. The blue edges represent back-edges between connected components of $T \setminus (D \setminus \Dhat)$. On the right hand side we have $\Dhat \setminus D$. The edges with an end in $\Dhat \setminus D$ are colored green.}
\label{fig:decomposition}
\end{figure}

\begin{figure}
\centering
\begin{tikzpicture}[x=0.75pt,y=0.75pt,line width=0.75pt]

\node [circle,draw,fill={rgb,255:red,176;green,228;blue,255}] (rC1) at (193,230) {$r_{C_1}$};
\node [circle,draw,fill={rgb,255:red,176;green,228;blue,255}] (rC3) at (224,180) {$r_{C_3}$};
\node [circle,draw,fill={rgb,255:red,255;green,246;blue,170}] (rC2) at (161,174) {$r_{C_2}$};
\node [circle,draw,fill={rgb,255:red,255;green,246;blue,170}] (rC5) at (184,113) {$r_{C_5}$};
\node [circle,draw,fill={rgb,255:red,191;green,176;blue,255}] (rC4) at (287,179) {$r_{C_4}$};
\node [circle,draw,fill={rgb,255:red,191;green,176;blue,255}] (u1)  at (394,219) {$u_1$};
\node [circle,draw,fill={rgb,255:red,191;green,176;blue,255}] (u2)  at (375,176) {$u_2$};
\node [circle,draw,fill={rgb,255:red,255;green,192;blue,184}] (u3)  at (418,161) {$u_3$};
\node [circle,draw,fill={rgb,255:red,255;green,192;blue,184}] (u4)  at (385,122) {$u_4$};
\node [circle,draw,fill={rgb,255:red,255;green,246;blue,170}] (u5)  at (420, 83) {$u_5$};

\draw (rC1) -- node [midway,xshift=-5,yshift=-2] {$1$} (rC3);
\draw (rC2) -- node [left                      ] {$2$} (rC5);
\draw (rC2) -- node [above                     ] {$6$} (u5);
\draw (rC5) -- node [below                     ] {$6$} (u5);
\draw (rC4) -- node [above                     ] {$4$} (u1);
\draw (u1)  -- node [midway,xshift= 5,yshift=-2] {$5$} (u2);
\draw (u3)  -- node [midway,xshift= 5,yshift=-3] {$3$} (u4);

\draw [dotted] (333,256) -- (333,62) ;

\end{tikzpicture}
\caption{The connectivity graph $\mathcal{M}$ for the example shown in Figure~\ref{fig:graph}. Edges are labeled with their types. The left hand side of the dotted line are vertices representing the internal components of $T \setminus (D \setminus \Dhat)$; the right hand side of the dotted line are vertices in $\Dhat \setminus D$. Edge $(r_{C_1}, r_{C_3})$ is due to a back-edge between $C_1$ and $C_3$. Edges $(r_{C_2}, r_{C_5}), (r_{C_2}, u_5), (r_{C_5}, u_5), (u_1, u_2)$ are all due to mutual connections to a hanging subtree. Edges $(r_{C_4}, u_1), (u_3, u_4)$ are both due to direct edges. Vertices in the same connected component are filled with the same color.}
\label{fig:graph}
\end{figure}
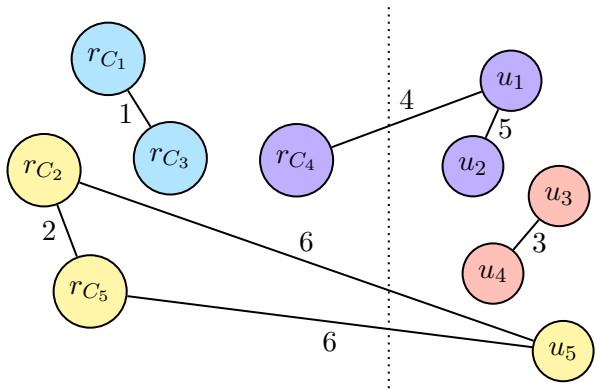

In the update phase, we receive the real set of failed vertices $D$. Since $\Dhat$ has already been removed from $G$ before computing our DFS tree $T$, we just need to further remove $D \setminus \Dhat$ from $T$, and then add back $\Dhat \setminus D$, which we needlessly removed (because of prediction errors) in the preprocessing phase. See Figure~\ref{fig:decomposition} for the decomposition of $G$ after these operations have been performed.

An issue with the full decomposition graph is that the number of hanging subtrees can be as large as order of $n$. We want to ``shrink'' the decomposition graph by getting rid of the hanging subtrees while still preserving connectivity. For this purpose, we define an auxiliary \emph{connectivity graph}, which builds on and extends a similar construction already introduced in \cite{Kosinas23}.

\begin{definition}
Let $\mathcal{M}$ be a graph with the vertex set
\[V(\mathcal{M}) \coloneqq (\Dhat \setminus D) \cup \{r_C : C \text{ is an internal component of } T \setminus (D \setminus \Dhat)\}.\] The edge set $E(\mathcal{M})$ consists of edges of the following six types (the first two types are the same as in \cite{Kosinas23}).
\begin{enumerate}
    \item If there is a back-edge connecting two internal components $C_1, C_2$ of $T \setminus (D \setminus \Dhat)$, then we add an edge between $r_{C_1}$ and $r_{C_2}$ in $\mathcal{M}$.
    \item If there is a hanging subtree $H$ of $T \setminus (D \setminus \Dhat)$ which is connected to internal components $C_1, \ldots, C_k$ through back-edges, with $C_k$ being an ancestor of all $C_1, \ldots, C_{k-1}$, then we add edges $(r_{C_k}, r_{C_1}), \ldots, (r_{C_k}, r_{C_{k-1}})$ in $\mathcal{M}$.
    \item If there is an edge connecting two vertices $u_1, u_2 \in \Dhat \setminus D$ in $G$, then we also add an edge between $u_1$ and $u_2$ in $\mathcal{M}$.
    \item If there is an edge connecting an internal component $C$ of $T \setminus (D \setminus \Dhat)$ and a vertex $u \in \Dhat \setminus D$, then we add an edge between $r_{C}$ and $u$ in $\mathcal{M}$.
    \item If there are two vertices $u_1, u_2 \in \Dhat \setminus D$ and a hanging subtree $H$ of $T \setminus (D \setminus \Dhat)$ such that both $u_1$ and $u_2$ are connected to $H$, then we add an edge between $u_1$ and $u_2$ in $\mathcal{M}$.
    \item If there are an internal component $C$ of $T \setminus (D \setminus \Dhat)$, a vertex $u \in \Dhat \setminus D$, and a hanging subtree $H$ of $T \setminus (D \setminus \Dhat)$ such that there is a back-edge connecting $C$ and $H$, and there is an edge connecting $u$ and $H$, then we add an edge between $r_{C}$ and $u$ in $\mathcal{M}$.
\end{enumerate}
As an illustration, in Figure~\ref{fig:graph} we show the connectivity graph for the graph in Figure~\ref{fig:decomposition}.
\end{definition}

The validity of type-2 edges comes from the following property of back-edges: If $e$ is a back-edge of $T \setminus (D \setminus \Dhat)$, then we have that either (i) the two ends of $e$ are in two internal components that are related as ancestor and descendant, or (ii) one end of $e$ is in a hanging subtree $H$ and the other end lies in an internal component that is an ancestor of $H$ (see Lemma 6 and Corollary 7 in \cite{Kosinas23} for a full proof). Hence, the internal components $C_1, \ldots, C_k$ in the characterization of type-2 edges are all ancestors of $H$, so it suffices to pick the most ancestral internal component $C_k$ and connect it by type-2 edges with the remaining components $C_1, \ldots, C_{k-1}$.

In the next lemma we show that $\mathcal{M}$ correctly captures the connectivity between internal components of ${T \setminus (D \setminus \Dhat)}$ and $\Dhat \setminus D$ in $G \setminus D$.

\begin{lemma}
Let $S$ and $S'$ each be an internal component of $T \setminus (D \setminus \Dhat)$ or a vertex in $\Dhat \setminus D$. Then $S$ and $S'$ are connected in $G \setminus D$ if and only if the vertices representing them in $\mathcal{M}$, $S_\mathcal{M}$ and $S'_{\mathcal{M}}$, are connected in $\mathcal{M}$.
\end{lemma}
\begin{proof}
First, let us show the ``only if'' direction. Let $S = P_0 - P_1 - P_2 - \dots - P_{k-1} - P_k = S'$ denote a path from $S$ to $S'$ in $G \setminus D$, where each $P_i$ is either an internal component of $T \setminus (D \setminus \Dhat)$, a hanging subtree of ${T \setminus (D \setminus \Dhat)}$, or a vertex in $\Dhat \setminus D$. Note that by the property of back-edges above, we cannot have two consecutive hanging subtrees in the path. 

Let us consider a segment $(P_i - P_{i+1})$ of this path. If neither $P_i$ nor $P_{i+1}$ is a hanging subtree, then we automatically have that their representatives in $\mathcal{M}$ are connected, by one of the constructions of type-1, type-3, or type-4 edges.

However, if one of $P_i$ and $P_{i+1}$ is a hanging subtree, and without loss of generality we assume it is $P_i$, then it follows that $P_{i-1}$ exists and it is not a hanging subtree. If $P_{i-1}$ and $P_{i+1}$ are both internal components, then their  representatives in $\mathcal{M}$ are (perhaps indirectly) connected by type-2 edges. If $P_{i-1}$ and $P_{i+1}$ are both vertices in $\Dhat \setminus D$, then their representatives are connected in $\mathcal{M}$ by a type-5 edge. If one of them is an internal component and the other is a vertex in $\Dhat \setminus D$, then their representatives are connected in $\mathcal{M}$ by a type-6 edge. Therefore, by applying the connectivity to all segments of the path, we get that the representatives of $S$ and $S'$ in $\mathcal{M}$ are connected, as desired.

Now, let us show the more straightforward ``if'' direction. Suppose $S_\mathcal{M}$ and $S'_{\mathcal{M}}$ are connected in $\mathcal{M}$. Then $S$ and $S'$ are either directly connected with an edge in $G \setminus D$, or indirectly connected through a hanging subtree in $G \setminus D$. In either case, we have that $S$ and $S'$ are connected.
\end{proof}

Note that only vertices from $\Dhat \setminus D$ and the roots of internal components of ${T \setminus (D \setminus \Dhat)}$ are present in $V(\mathcal{M})$. Hence, the number of vertices in $\mathcal{M}$ is $O(\eta)$. We argue that we can add in type-1 to type-6 edges in $\tilde{O}(\eta^4)$ time. Proposition~11 and Proposition~12 in \cite{Kosinas23} describe how we can compute type-1 edges in $\tilde{O}(\eta^2)$ time and type-2 edges in $\tilde{O}(\eta^4)$ time. Computing type-3 edges is also straightforward. Hence, we focus on computing type-4, type-5, and type-6 edges in the following sections. We finish the update phase by computing connected components of $\mathcal{M}$, in time $O(|V(\mathcal{M})| + |E(\mathcal{M})|) = O(\eta^2)$, so that in the query phase we are able to decide in constant time whether two vertices in $V(\mathcal{M})$ are connected.

\subsubsection{Computing type-4 edges}

In this section we describe our algorithm for adding type-4 edges to $\mathcal{M}$, see Algorithm~\ref{alg:type5}. Recall that type-4 edges connect vertices from $\Dhat \setminus D$ with internal components in which they have neighbors.

\begin{algorithm}
    \caption{For each internal component $C$ of $T \setminus (D \setminus \Dhat)$ and $u \in \Dhat \setminus D$, adds a type-4 edge between $r_C$ and $u$ in $\mathcal{M}$ if there is an edge in $G$ connecting $u$ to $C$.}
    \label{alg:type5}

    \ForEach{$u \in \Dhat \setminus D$} {
      \ForEach{{\rm internal component} $C$ of $T \setminus (D \setminus \Dhat)$} {
        $f_1, f_2, \ldots, f_k \gets$ boundary vertices of $C$ sorted in increasing order\;
        \For{$i = 0, \ldots, k$}{
          \leIf{$i>0$}{$L \gets f_i + \mathit{ND}(f_i)$}{$L \gets r_C$}
          \leIf{$i<k$}{$R \gets f_{i+1}-1$}{$R \gets r_C + \mathit{ND}(r_C)-1$}
          \tcp{Check the condition below in O(log n) time using binary search.}
          \If{$neighbors(u) \cap [L, R] \neq \emptyset$} {
            add an edge between $r_C$ and $u$ in $\mathcal{M}$\;
          }
        }
      }
    }
\end{algorithm}

For each vertex $u \in \Dhat \setminus D$ and each internal component $C$ of $T \setminus (D \setminus \Dhat)$ we want to efficiently establish if there is an edge between them. Let $f_1, f_2, \ldots, f_k$ be the boundary vertices of $C$. By Lemma~4(3) of \cite{Kosinas23}, $C$ can be written as the union of the following segments: $[r_C, f_1-1], [f_1+\mathit{ND}(f_1), f_2-1], \ldots, [f_{k-1}+\mathit{ND}(f_{k-1}), f_{k}-1], [f_k+\mathit{ND}(f_k), r_C+\mathit{ND}(r_C)-1]$. For each of these segments we check whether $neighbors(u)$ contains a vertex with the DFS number in that segment and if it does we add an edge between $u$ and $r_C$ in $\mathcal{M}$. Each such check can be performed efficiently, in time $O(\log(|neighbors(u)|)) = O(\log n)$, by using binary search. Indeed, for a segment $[L, R]$, we first binary search in $neighbors(u)$ the smallest number greater than or equal to $L$, and then just check if it is less than or equal to $R$.

For each internal component $C$, the number of binary searches we make is $O(|\partial( C)|)$. Since the total number of boundary vertices is $O(\eta)$, we have that the total number of binary searches in Algorithm~\ref{alg:type5} is $O(\eta^2)$, and hence its running time is $O(\eta^2 \log n)$.

\subsubsection{Computing type-5 edges}

In this section we describe our algorithm for adding type-5 edges to $\mathcal{M}$, see Algorithm~\ref{alg:type4}. Recall that type-5 edges join pairs of vertices in $\Dhat \setminus D$ that are connected in $G \setminus D$ via a hanging subtree.

For every $u, v \in \Dhat \setminus D$, we want to know whether they are connected to the same hanging subtree $H$. If we checked this separately for every tuple $(u, v, H)$, then it would be inefficient because the number of hanging subtrees can be as large as order of $n$. However, if we group together vertices that have edges to $u$ in their subtrees, which is exactly what we do in $T_u$, then we can process the hanging subtrees in batches. Let $f$ be a failed vertex in $D\setminus\Dhat$, and let $L_u(f)$ be the sequence of children of $f$ that are marked by $u$, sorted by their DFS numbering in $T_u$. Since we only care about hanging subtrees, we disregard vertices in $L_u(f)$ which are roots of internal components. This breaks $L_u(f)$ into $O(\eta)$ \emph{slices} consisting of roots of hanging subtrees. Each such slice $s = (L, \ldots, R) \in S_u(f)$ corresponds to (potentially) multiple hanging subtrees -- each subtree containing a neighbor of $u$ -- which form a contiguous set of vertices $[L, R + \mathit{ND}(R) - 1]$ in the DFS numbering for $T_u$. We can efficiently check if $v$ has a neighbor with the DFS number in that range. Indeed, we binary search in $neighbors_u(v)$ the smallest number greater than or equal to $L$, and check if it is less than or equal to $R + \mathit{ND}(R) - 1$. If this is the case, then $u$ and $v$ both have edges to a common hanging subtree, so we add a type-5 edge between them.

\begin{algorithm}
    \caption{Adds a type-5 edge in $\mathcal{M}$ between every $u, v \in \Dhat \setminus D$ that are connected to the same hanging subtree.}
    \label{alg:type4}

    \ForEach{$u \in \Dhat \setminus D$} {
    \tcp{Consider the DFS tree $T_u$ and the respective DFS numbering.}
        \ForEach{$v \in \Dhat \setminus D$} {
            \ForEach{{\rm failed vertex} $f \in D \setminus \Dhat$} {
                $S_u(f)$ $\gets$ the collection of contiguous slices of the sorted list of children of $f$ that are marked by $u$ consisting only of roots of hanging subtrees\;
                \ForEach{{\rm slice} $s \in S_u(f)$} {
                    $L \gets \min(s)$\;
                    $R \gets \max(s)$\;
                    \tcp{Check the condition below in O(log n) time using binary search.}
                    \If{$neighbors_u(v) \cap [L, R + \mathit{ND}(R) - 1] \neq \emptyset$\label{line:1drt4}} {
                        add an edge between $u$ and $v$ in $\mathcal{M}$\;
                    }
                }
            }
        }
    }
\end{algorithm}

We claim that the number of binary searches we make in line~\ref{line:1drt4} of Algorithm~\ref{alg:type4} is $O(\eta^3)$. Each of them corresponds to a tuple $(u, v, f, s)$. We leverage the fact that the number of internal components in $T \setminus (D \setminus \Dhat)$ is $O(\eta)$. Suppose $f_1, \ldots, f_k$ are all the failed vertices in $D \setminus \Dhat$. Then $|S_u(f_1)| + |S_u(f_2)| + \cdots + |S_u(f_k)| = O(\eta)$ for every $u \in \Dhat \setminus D$. It follows that Algorithm~\ref{alg:type4} runs in time $O(\eta^3 \log n)$.

\subsubsection{Computing type-6 edges}

In this section we describe our algorithm for adding type-6 edges to $\mathcal{M}$, see Algorithm~\ref{alg:type6}. Recall that type-6 edges join vertices from $\Dhat \setminus D$ with internal components that are connected to them indirectly via a hanging subtree.

\begin{algorithm}
    \caption{Adds a type-6 edge in $\mathcal{M}$ between every $u \in \Dhat \setminus D$ and every internal component $C$ of $T \setminus (D \setminus \Dhat)$ that are connected to the same hanging subtree $H$.}
    \label{alg:type6}

    \ForEach{$u \in \Dhat \setminus D$} {
    \tcp{Consider the DFS tree $T_u$ and the respective DFS numbering.}
        \ForEach{{\rm failed vertex} $f \in D \setminus \Dhat$} {
            $S_u(f)$ $\gets$ the collection of contiguous slices of the sorted list of children of $f$ that are marked by $u$ consisting only of roots of hanging subtrees\label{line:segst6}\;
            $f' \gets f$\;
            \While{$f' \neq null$\label{line:whilet6}} {
                \If{$p_T(f') \neq parent_F(f')$} { \tcp{Vertex $f'$ is a boundary vertex of some internal component.}
                    $C \gets$ the internal component of $T \setminus (D \setminus \Dhat)$ such that $f' \in \partial(C)$\;
                    \ForEach{{\rm slice} $s \in S_u(f)$} {
                        $L \gets \min(s)$\;
                        $R \gets \max(s)$\;
                        \If{$\textup{2D\_range\_query}_u([L, R+\mathit{ND}(R)-1] \times [r_C, p_T(f')]) \neq \emptyset$\label{line:2dt6}} {
                            add an edge between $u$ and $r_C$ in $\mathcal{M}$\;
                        }
                    }
                }
                $f' \gets parent_F(f')$\;
            }
        }
    }
\end{algorithm}

Conceptually, we want to query each $(C, u, H)$-tuple (where $C$ is an internal component, $u$ is a vertex from $\Dhat \setminus D$, and $H$ is a hanging subtree) whether $H$ contains a neighbor of $u$ and a back-edge to $C$. Similarly to how we compute type-5 edges, for each $u \in \Dhat \setminus D$ we group together relevant hanging subtrees in order to bound the number of queries we make on them (line~\ref{line:segst6}). Note that if $C$ is an internal component connected to $H$ through a back-edge, then $C$ is an ancestor of $H$ (see Lemma 6 and Corollary 7 in \cite{Kosinas23}). Therefore, once we have fixed a vertex $u \in \Dhat \setminus D$ and a failed vertex $f \in D \setminus \Dhat$, we only need to traverse $T_u$ up from $f$ and perform queries on each encountered internal component (the while-loop in line~\ref{line:whilet6}). During that traversal, whenever we encounter an internal component $C$, we want to check whether there exists a back-edge between $C$ and a hanging subtree $H$ whose root is a child of $f$ and which contains a neighbor of $u$. We do it in line~\ref{line:2dt6} by sending a query to the 2D-range-emptiness data structure (created in item~\ref{item:2dTu} of the preprocessing phase) indexed by the DFS numbering of tree $T_u$. The validity of line~\ref{line:2dt6} comes from two lemmas in \cite{Kosinas23}, which we reprint here:

\begin{lemma} [\cite{Kosinas23} Lemma 4(2)]
\label{lemma:prev3.2}
    Let $C$ be an internal component of $T \setminus X$, where $X$ is any set of failed vertices. For every vertex $v$ that is a descendant of $C$, there is a unique boundary vertex of $C$ that is an ancestor of $v$.
\end{lemma}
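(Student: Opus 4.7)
The plan is to prove the statement in two parts, existence and uniqueness of the boundary vertex, exploiting that $C$, as a connected component of $T \setminus F$, is itself a connected subtree of $T$ rooted at some vertex $r_C$, whose vertices are exactly those descendants of $r_C$ reachable from $r_C$ along a path of $T$ that avoids $F$.

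For existence, I would fix a vertex $u \in C$ whose subtree contains $v$ and walk down the unique tree path from $u$ to $v$. Reading ``descendant of $C$'' in the natural way (i.e., $v$ lies in the subtree of some vertex of $C$ but $v \notin C$, since the claim is vacuous when $v \in C$), this path must contain at least one vertex of $F$: otherwise $v$ would be reachable from $u$ inside $T \setminus F$ and hence lie in $C$. Let $f$ be the first such vertex encountered going from $u$ toward $v$; then every vertex strictly above $f$ on this path lies in $C$, in particular $p(f) \in C$. By the definition of boundary vertex, $f$ qualifies, and $f$ is an ancestor of $v$.

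For uniqueness, suppose $f_1$ and $f_2$ are distinct boundary vertices of $C$ that are both ancestors of $v$. Since both lie on the unique ancestor chain of $v$ in $T$, one is a strict ancestor of the other; without loss of generality $f_1$ is a strict ancestor of $f_2$. Because $f_1 \in F$ and $p(f_2) \in C \subseteq V \setminus F$, we have $f_1 \neq p(f_2)$, so $f_1$ is in fact a strict ancestor of $p(f_2)$. The unique tree path from $r_C$ down to $p(f_2)$ therefore passes through $f_1$. But $C$ is a connected subtree of $T$ rooted at $r_C$, so $p(f_2) \in C$ must be joined to $r_C$ by a $T$-path lying entirely in $C$; this is contradicted by $f_1 \in F$ sitting on that very path.

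The only real obstacle is conventional: one must pin down what ``descendant of $C$'' means so that existence is not vacuous at the $v \in C$ corner and so the ``first failed vertex on the descending path'' is well defined. Once that convention is fixed in the natural way, the rest is a short case analysis on ancestor--descendant relations in the DFS tree $T$, and no property of the algorithm or of the preprocessing structures is needed.
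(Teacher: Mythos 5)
Your proof is correct. Note that this paper does not prove the lemma at all---it is imported verbatim from [Kosinas23, Lemma 3.2(2)]---and your argument (take the first failed vertex on the descending tree path out of $C$ for existence; for uniqueness, observe that a second boundary ancestor would place a failed vertex on the unique $T$-path inside $C$ from $r_C$ to $p(f_2)$) is exactly the standard elementary proof one would give, so there is nothing to add.
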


\begin{lemma} [\cite{Kosinas23} Lemma 8]
\label{lemma:prev3.6}
    Let $C, C'$ be two connected components of $T \setminus X$, where $X$ is any set of failed vertices, such that $C'$ is an internal component that is an ancestor of $C$. Let $b$ be the boundary vertex of $C'$ that is an ancestor of $C$. Then there is a back-edge from $C$ to $C'$ if and only if there is a back-edge from $C$ whose lower end lies in $[r_{C'}, p_T(b)]$.
\end{lemma}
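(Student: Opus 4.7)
The argument rests on two observations about the tree structure. First, the tree path from $r_{C'}$ down to $p(b)$ lies entirely inside $C'$: both endpoints belong to $C'$, and any failed vertex on this path would separate them in $T \setminus F$, contradicting that they share a component. In DFS numbering this path is a monotone chain of vertices all falling inside the interval $[r_{C'}, p(b)]$. Second, by the lemma quoted just above, every vertex of $C$ is a descendant of $b$ in $T$; moreover, since $b$ is failed, descendants of $b$ in $T$ lie in components of $T \setminus F$ disjoint from $C'$.

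For the ``only if'' direction, suppose a back-edge has ancestor endpoint $v \in C'$ and descendant endpoint $w \in C$. Then $v$ lies on the root-to-$w$ path in $T$, which passes through $b$. The vertex $v$ is neither $b$ (failed) nor a proper descendant of $b$ (would lie outside $C'$), so it must be a proper ancestor of $b$; combined with $v \in C' \subseteq T(r_{C'})$ this forces $v$ onto the tree path from $r_{C'}$ to $p(b)$, so its DFS index is in $[r_{C'}, p(b)]$.

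For the ``if'' direction, suppose a back-edge from some $w \in C$ has ancestor endpoint $v$ with DFS index in $[r_{C'}, p(b)]$. Then $v$ is a proper ancestor of $w$, hence an ancestor of $b$ (because $w$ descends from $b$); $v$ cannot be $b$ itself nor any descendant of $b$, because any such vertex has DFS index strictly greater than $p(b)$. Thus $v$ is a proper ancestor of $b$, and the condition $v \geq r_{C'}$ in DFS order pins $v$ onto the tree path from $r_{C'}$ to $p(b)$. By the first observation this entire path lies in $C'$, so $v \in C'$ and the back-edge connects $C$ to $C'$.

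The main subtlety I would flag is that the DFS interval $[r_{C'}, p(b)]$ is typically strictly larger than the ancestor path from $r_{C'}$ to $p(b)$, since it also sweeps in entire completed subtrees of ``left siblings'' of ancestors of $p(b)$ inside $T(r_{C'})$. Membership of $v$ in the interval therefore does not by itself place $v$ in $C'$; the conclusion is forced only after combining it with the independent fact, carried by the back-edge being anchored at a vertex of $C$, that $v$ must also be a proper ancestor of $b$.
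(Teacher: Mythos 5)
Your proof is correct, and it follows the natural (and, as far as I can tell, the intended) argument: the vertices of $C'$ that are ancestors of $C$ are exactly the tree path from $r_{C'}$ to $p(b)$, and membership of the lower end in the DFS interval $[r_{C'},p(b)]$, \emph{combined with} the fact that this lower end is an ancestor of a vertex of $C$, pins it onto that path. Note that the paper only cites this lemma from Kosinas~\cite{Kosinas23} without reproducing its proof, so there is no in-paper argument to diverge from; your final remark about the interval being strictly larger than the path is precisely the right subtlety to flag.
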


Now let us show the correctness of Algorithm~\ref{alg:type6}. Suppose that a vertex $u$ from $\Dhat \setminus D$ and an internal component $C$ are both connected in $G \setminus D$ to a hanging subtree $H$, and denote by $f \in D \setminus \Dhat$ the parent of the root of $H$. We claim that Algorithm~\ref{alg:type6} eventually finds $C$ in $T_u$ by traversing up from $f$. Indeed, by Lemma~\ref{lemma:prev3.2}, let $f'$ be the boundary vertex of $C$ that is an ancestor of $f$. Then, by Lemma~\ref{lemma:prev3.6}, we know that the 2D range query in line~\ref{line:2dt6} returns true. Conversely, if the 2D range query in line~\ref{line:2dt6} returns true for some $C$ and $f'$, then we can also conclude that $u$ is connected with $C$ through the mediation of a hanging subtree whose root is a child of $f$.

By the same reasoning as for Algorithm~\ref{alg:type4}, each query corresponds to a tuple $(u, f, f', s)$. For fixed $u$ and $f'$ the total number of segments $s$ in $T_u \setminus (D \setminus \Dhat)$, over all choices of $f$, is $O(\eta)$. Hence the number of 2D range queries Algorithm~\ref{alg:type6} makes is $O(\eta^3)$. Since each such query takes $O(\log n)$ time~\cite{ChanLP11}, the total running time is $O(\eta^3 \log n)$.

\subsection{The query phase}
Finally, we describe our algorithm for the query phase, see Algorithm~\ref{alg:query}. Given $s$ and $t$ in $G \setminus D$, we first determine where they lie in. If they lie in some internal component(s) of $T \setminus (D \setminus \Dhat)$ or in the extra set of vertices $\Dhat \setminus D$, we can check whether these components/vertices are connected in $\mathcal{M}$. However, if $s$ (or~$t$) lies in a hanging subtree, we try to find a surrogate for $s$ (respectively~$t$), namely a vertex in $\Dhat \setminus D$ or an internal component of $T \setminus (D \setminus \Dhat)$ that is connected by an edge to the hanging subtree of $s$ (respectively~$t$).\footnote{Note that in order to find a surrogate internal component it is sufficient to check the $low_1, \ldots, low_\eta$ points of the root of the hanging subtree.} If we can find such a surrogate, we can replace $s$ (respectively $t$) with it, and refer to $\mathcal{M}$ as in the previous case. However, if no such surrogate exists, either for $s$ or for $t$, then $s$ and $t$ are connected if and only if they reside in the same hanging subtree.

\begin{algorithm}
    \caption{Given two vertices $s, t \in V \setminus D$, answers if they are connected in~$G \setminus D$.}
    \label{alg:query}
    \SetKwProg{Fn}{function}{}{}
    \Fn{\textup{find\_representative($u$)}}{
      \lIf {$u \in \Dhat \setminus D$} {
        \Return $u$%
      }
      \lIf {$u$ lies in an internal component $C$ of $T \setminus (D \setminus \Dhat)$} {
        \Return $r_C$%
      }
      $H \gets $ the hanging subtree of $T \setminus (D \setminus \Dhat)$ where $u$ lies in\;
      \tcp{Try to find an internal component as a surrogate.}
      \For{$i \in \{1, \ldots, \eta\}$ \label{line:sur1query}} {
        \If{$low_i(r_H) \neq null$ \KwSty{and} $low_i(r_H) \notin D$} {
          $C \gets$ the internal component of $T \setminus (D \setminus \Dhat)$ where $low_i(r_H)$ lies in\;
          \Return $r_C$\;
        }
      }
      \tcp{Try to find a vertex in $\Dhat \setminus D$ as a surrogate.}
      \ForEach{$v \in \Dhat \setminus D$ \label{line:sur2query}} {
        \lIf{$r_H$ is marked in $T_v$} {
          \Return $v$%
        }
      }
      \tcp{No surrogate found, return the hanging tree's root as the representative.}
      \Return $r_H$\;
    }
    $s \gets $ find\_representative($s$)\;
    $t \gets $ find\_representative($t$)\;
    \eIf {$s \in V(\mathcal{M})$ \KwSty{and} $t \in V(\mathcal{M})$} {
      \Return whether $s$ and $t$ are connected in $\mathcal{M}$\;
    }(\tcc*[h]{At least one vertex lies in an isolated hanging subtree,  represented by its root.}){
      \Return whether $s = t$\;
    }
\end{algorithm}

Now let us justify the running time for Algorithm~\ref{alg:query}. Given that we have computed the connected components of $\mathcal{M}$ at the end of the update phase, we can decide in $O(1)$ time the connectivity among internal components and vertices from $\Dhat \setminus D$. Locating the correct connected component in $T \setminus (D \setminus \Dhat)$ and deciding whether it is an internal component or a hanging subtree takes $O(\eta)$ time (see Proposition~13 in \cite{Kosinas23} for more detail). The for-loops on lines \ref{line:sur1query} and \ref{line:sur2query}, which are looking for a surrogate, both take $O(\eta)$ time. Summing up, Algorithm~\ref{alg:query} answers a query in $O(\eta)$ time.

\section{Lower bound for preprocessing time}\label{sec:lower_bound}
In this section, we show that the $\tilde{O}(dm)$ preprocessing time of our oracle cannot be improved by a polynomial factor assuming the Exact Triangle Hypothesis, which is implied by both the 3SUM Hypothesis and the APSP Hypothesis.

To show this, we give a fine-grained reduction from the offline SetDisjointness problem. This problem was introduced by Kopelowitz, Pettie, and Porat~\cite{KopelowitzPP16}, who showed it is hard under the 3SUM Hypothesis. Vassilevska Williams and Xu extended their hardness result to hold under a weaker assumption, namely the Exact Triangle Hypothesis; as a consequence the problem is also hard under the APSP Hypothesis. The input to the SetDisjointness problem is a universe of elements $U$, a family of subsets $F \subseteq 2^U$, and a collection of query pairs $(S_i, S_j) \in F \times F$, whose number is denoted by $q$. For each query, one has to answer whether $S_i \cap S_j$ is empty or not. We use the following lower bound for SetDisjointness from \cite[Corollary~3.11]{WilliamsX20}.

\begin{theorem}[Vassilevska Williams and Xu~{\cite[Corollary~3.11]{WilliamsX20}}]
\label{thm:setdisjointness}
For any constant $\gamma \in (0, 1)$, let $\mathcal{A}$ be an algorithm for offline SetDisjointness where $|U| = \Theta(n^{2-2\gamma})$, $|F| = \Theta(n)$, each set in $F$ has at most $O(n^{1-\gamma})$ elements from $U$, and $q = \Theta(n^{1+\gamma})$. Assuming the Exact Triangle Hypothesis, $\mathcal{A}$ cannot run in $O(n^{2-\varepsilon})$ time, for any $\varepsilon > 0$.
\end{theorem}

\begin{proof}[Proof of Theorem~\ref{thm:lowerbound}]
We reduce solving the offline SetDisjointness problem, in the parameter regime specified in Theorem~\ref{thm:setdisjointness} (later, we set $\gamma = 1 - \varepsilon / 2$), to creating a connectivity oracle for predictable vertex failures on an undirected graph with $O(n + n^{2-2\gamma})$ vertices and $O(n^{2-\gamma})$ edges, with $d=O(n)$, and then running the update and query phases $q$ times each, with $\eta=O(1)$.

Here is how we construct the undirected graph $G$. Let $V(G)$ consist of two disjoint sets $A$ and $B$. The vertices in $A = \{a_1, a_2, \ldots, a_{|F|}\}$ represent the sets in $F = \{S_1, S_2, \ldots, S_{|F|}\}$, and the vertices in $B = \{b_u \mid u \in U\}$ represent the elements in $U$. We put edges between $A$ and $B$ so that $a_i \in A$ is connected to $b_u \in B$ if and only if $u \in S_i$. Note that, for $i \neq j$, $S_i \cap S_j \neq \emptyset$ if and only if $a_i$ and $a_j$ are connected in the subgraph of $G$ induced on $B \cup \{a_i, a_j\}$. The constructed graph has $|A| + |B| = |F| + |U| =  O(n + n^{2-2\gamma})$ vertices and $O(|F| \cdot \max_i |S_i|) = O(n^{2-\gamma})$ edges. 

In the preprocessing phase we set $\Dhat = A$. Then, for each input query pair $(S_i, S_j)$, we set $D = \Dhat \setminus \{a_i, a_j\}$, do the update, and then query connectivity between $a_i$ and $a_j$.

Let $t_1$ denote the preprocessing time; $t_2$ denote the update time; and $t_3$ denote the query time. Then, the total time to solve SetDisjointness is $t_1 + q \cdot (t_2 + t_3)$. By Theorem~\ref{thm:setdisjointness}, we get that $t_1 + q \cdot (t_2 + t_3) \geqslant n^{2-o(1)}$, unless the Exact Triangle Hypothesis fails. Since $\eta = 2$ is a constant, we know that $t_2 = n^{o(1)}$ and $t_3 = n^{o(1)}$, and hence $q \cdot (t_2 + t_3) = n^{1+\gamma+o(1)}$. This forces $t_1 \geqslant n^{2-o(1)}$. Recall that $d = |\Dhat| = |A| = |F| = O(n)$, and $m = O(n^{2-\gamma})$. If $t_1 = O(d^{1-\varepsilon}m)$ (or $O(dm^{1-\varepsilon})$), then by setting $\gamma = 1 - \varepsilon/2$ we would get that $t_1 = O(n^{2-\varepsilon/2})$ and hence the Exact Triangle Hypothesis would fail.
\end{proof}

\bibliographystyle{alphaurl}
\bibliography{main}

\end{document}